\def\re#1#2{D(#1\vrt#2)}
\def\ext#1{2^{#1}}
\def\bbl#1{\big[#1\big]}
\def\mbl#1{\big\{#1\big\}}
\def\sbl#1{\big(#1\big)}
\def\nom#1{#1'}
\def\brk#1{|#1\ra\la#1|}
\def\lexp#1{\underline{D}(#1)}
\def\uexp#1{\overline{D}(#1)}
\newcommand{\vrt}{\Vert}
\newcommand{\hsst}{\sigma}
\newcommand{\cfn}{\mathcal{P}}
\newcommand{\mms}{\pi}
\newcommand{\stq}{\st_{\opq}}
\newcommand{\gfd}{F}
\newcommand{\pfd}{P}
\newcommand{\afa}{\searrow}
\newcommand{\afb}{\nearrow}
\newcommand{\lnt}{\ln2}
\newcommand{\uhst}{\mms_{\opg}}
\newcommand{\stsqg}{\hsst_{\ops\opq\opg}}
\newcommand{\smpr}{\ep^{*}}
\newcommand{\legp}{legitimate parties}
\newcommand{\reny}{R\'{e}nyi}
\newcommand{\ralp}{{\reny}-$\alpha$}
\newcommand{\len}{r}
\renewcommand{\Re}{\mathbb{R}}
\newcommand{\pr}{\mathrm{Pr}}
\newcommand{\tr}{\mathrm{Tr}}
\newcommand{\rank}{\mathrm{rank}}
\newcommand{\supp}{\mathrm{supp}}
\newcommand{\lgt}{\log_{2}}
\newcommand{\la}{\langle}
\newcommand{\ra}{\rangle}
\newcommand{\ot}{\otimes}
\newcommand{\iv}{J}
\newcommand{\ism}{U}
\newcommand{\sd}{d}
\newcommand{\sr}{D}
\newcommand{\td}{d_{1}}
\newcommand{\ord}{O}
\newcommand{\bll}{B}
\newcommand{\smin}{H_{\mathrm{min}}}
\newcommand{\dusd}{\sd(\opx|\opq)_{\st}}
\newcommand{\dusr}{\sr(\opx|\opq)_{\st}}
\newcommand{\dusrt}{\sr(\ops|\opq\opg)_{\stsqg}}
\newcommand{\hil}{{\cal{H}}}
\newcommand{\hlx}{\hil_{\opx}}
\newcommand{\hla}{\hil_{\opa}}
\newcommand{\hlb}{\hil_{\opb}}
\newcommand{\hle}{\hil_{\ope}}
\newcommand{\hlbp}{\hil_{\opbp}}
\newcommand{\hlc}{\hil_{\opc}}
\newcommand{\opq}{\opb}
\newcommand{\hlq}{\hil_{\opq}}
\newcommand{\hlg}{\hil_{\opg}}
\newcommand{\hls}{\hil_{\ops}}
\newcommand{\cqst}{\st_{\opx\opq}}
\newcommand{\rg}{{\cal{R}}}
\newcommand{\dd}{d}
\newcommand{\dma}{\dd_{\opa}}
\newcommand{\dmb}{\dd_{\opb}}
\newcommand{\ii}{i}
\newcommand{\nn}{n}
\newcommand{\ff}{f}
\newcommand{\rvg}{G}
\newcommand{\opg}{\rvg}
\newcommand{\g}{g}
\newcommand{\clg}{{\cal{G}}}
\newcommand{\rvx}{X}
\newcommand{\xx}{x}
\newcommand{\xxd}{\xx'}
\newcommand{\rvy}{Y}
\newcommand{\yy}{y}
\newcommand{\clx}{{\cal{X}}}
\newcommand{\rvs}{S}
\newcommand{\s}{s}
\newcommand{\cls}{{\cal{S}}}
\newcommand{\nqs}{\cls}
\newcommand{\sqs}{\cls_{\le}}
\newcommand{\pp}{p}
\newcommand{\prg}{\pp_{\g}}
\newcommand{\prx}{\pp_{\xx}}
\newcommand{\prxd}{\pp_{\xxd}}
\newcommand{\dl}{\delta}
\newcommand{\dlr}{\dl_{\len}}
\newcommand{\ep}{\epsilon}
\newcommand{\lep}{\underline{\ep}}
\newcommand{\uep}{\overline{\ep}}
\newcommand{\epab}{\lep}
\newcommand{\epb}{\uep}
\newcommand{\gm}{\gamma}
\newcommand{\kp}{\kappa}
\newcommand{\lm}{\lambda}
\newcommand{\rep}{\ren_{\ep}(\rvx|\rvy)}
\newcommand{\sg}{\sigma}
\newcommand{\sgb}{\sg_{\opb}}
\newcommand{\sguq}{\mms_{\opx}\ot\stq}
\newcommand{\st}{\rho}
\newcommand{\stx}{\st_{\xx}}
\newcommand{\sta}{\st_{\opa}}
\newcommand{\stab}{\st_{\opa\opb}}
\newcommand{\stbc}{\st_{\opb\opc}}
\newcommand{\stabc}{\st_{\opa\opb\opc}}
\newcommand{\stxy}{\st_{\opx\opy}}
\newcommand{\stxd}{\st_{\xxd}}
\newcommand{\sts}{\stb}
\newcommand{\stsc}{\stbc}
\newcommand{\stb}{\st_{\opb}}
\newcommand{\stp}{\hat{\st}_{\opb}}
\newcommand{\bstab}{\bar{\st}_{\opa\opb}}
\newcommand{\bstb}{\bar{\st}_{\opb}}
\newcommand{\cstb}{\check{\st}_{\opb}}
\newcommand{\hstb}{\hat{\st}_{\opb}}
\newcommand{\vn}{S}
\newcommand{\lvn}{\underline{\vn}}
\newcommand{\uvn}{\overline{\vn}}
\newcommand{\cvn}{\vn(\opa|\opb)_{\st}}
\newcommand{\ren}{R}
\newcommand{\rxro}{\Lambda}
\newcommand{\rxr}{\rxro_{\opb}}
\newcommand{\rxrc}{\rxro_{\opb\opc}}
\newcommand{\gqr}{\ren_{\ep}(\opa|\opb)_{\st}}
\newcommand{\ibr}{\bar}
\newcommand{\igr}{\ibr{\ren}_{\ep}}
\newcommand{\igqr}{\igr(\opa|\opb)_{\st}}
\newcommand{\igqrn}{\igr(\opa^{\nn}|\opb^{\nn})_{\st^{\ot\nn}}}
\newcommand{\lme}{\ren_{\ep}(\opx|\opb)_{\st}}
\newcommand{\lmr}{l}
\newcommand{\prj}{P}
\newcommand{\pbc}{\prj}
\newcommand{\opa}{A}
\newcommand{\opb}{B}
\newcommand{\opbp}{B'}
\newcommand{\opc}{C}
\newcommand{\ope}{E}
\newcommand{\ops}{S}
\newcommand{\opt}{T}
\newcommand{\opx}{X}
\newcommand{\opy}{Y}
\newcommand{\asg}{\st_{\s\g}}
\newcommand{\csg}{\opc_{\s\g}}
\newcommand{\xsg}{\opx_{\s\g}}
\newcommand{\css}{\opc_{+}}
\newcommand{\xss}{\opx_{+}}
\newcommand{\hlf}{1/2}
\newcommand{\id}{\mathbb{I}}
\newcommand{\idr}{\prj_{\st}}
\newcommand{\etz}{\eta_{0}}
\newcommand{\df}{\phi}
\newcommand{\rs}{\Delta}
\newcommand{\rsf}{\rs_{1}}
\newcommand{\rss}{\rs_{2}}
\newcommand{\upb}{\mu}
\newcommand{\lwb}{\lambda}
\newcommand{\ccn}{\check{c}}
\newcommand{\qch}{{\cal{F}}}
\newcommand{\snm}{\delta}
\newcommand{\nex}{e}
\newcommand{\gex}{\nex_{\gm}}
\newcommand{\eex}{\nex_{\ep}}
\newcommand{\elb}{\mu}
\newcommand{\eub}{\nu}
\newcommand{\opbs}{\opb^{*}}
\newcommand{\hlbs}{\hil_{\opbs}}
\newcommand{\stas}{\st^{*}}
\newcommand{\stbs}{\st_{\opbs}}
\newcommand{\dhy}{\hspace{0.3pt}{}^{\_}}
\newcommand{\ddt}{\hspace{-0.2pt}{}_{\cdot}\hspace{-0.4pt}}
\newcommand{\email}{yodai$\textcircled{\it a}$u$\dhy$aizu$\ddt$ac$\ddt$jp}
\theoremstyle{definition}
\newtheorem{theorem}{Theorem} 
\newtheorem{corollary}[theorem]{Corollary}
\newtheorem{proposition}[theorem]{Proposition}
\newtheorem{definition}[theorem]{Definition}
\begin{document}
\title{Randomness extraction via a quantum generalization
of the conditional collision entropy}
\author{Yodai~Watanabe,~\IEEEmembership{Member,~IEEE}%
\thanks{
This work was supported in part by 
JSPS Grants-in-Aid for Young Scientists (B) No. 21700021
and Scientific Research (C) Nos. 15K00020 and 19K11831.}
\thanks{The author is with 
Department of Computer Science and Engineering,
University of Aizu, Aizuwakamatsu, Fukushima 9658580, Japan
(e-mail: \email).}}%

\maketitle

\begin{abstract}
Randomness extraction against side information
is the art of distilling from a given source 
a key which is almost uniform conditioned on the side information.
This paper provides randomness extraction
against quantum side information
whose extractable key length is given by
a quantum generalization
of the collision entropy,
which is smoothed and conditioned differently from 
how this is done in existing schemes.
Based on the fact that the collision entropy
is not subadditive,
its optimization
with respect to additional side information is introduced,
and is shown to be asymptotically optimal.
The lower bound derived there for general states
is expressed as the difference between two unconditional entropies
and its evaluation reduces to an eigenvalue problem of two states,
which are the entire state and the marginal state of side information.
\end{abstract}

\begin{IEEEkeywords}
Randomness extraction;
Quantum collision entropy;
Extractable key length
\end{IEEEkeywords}

%
\IEEEpeerreviewmaketitle

\section{Introduction}
\label{intro}
Consider a classical system $\opx$ and a quantum system $\opq$
in a joint quantum state $\cqst$.
The task of randomness extraction
from the classical source $\opx$ against the quantum side information $\opq$
is to distill an almost random key $\rvs$ from $\opx$
by applying a classical channel $\cfn_{\opx\to\ops}$
from system $\opx$ to system $\ops$,
which results in a classical-quantum state 
$\hsst_{\ops\opq}=\cfn_{\opx\to\ops}(\cqst)$.
Here the randomness of $\rvs$
is measured by the trace distance
between the state $\hsst_{\ops\opq}$ of composite systems $\ops\opq$ 
and an ideal state $\mms_{\ops}\ot\stq$,
where $\mms_{\ops}$ is the maximally mixed state and
$\stq$ is the marginal state of $\cqst$.
A major application of randomness extraction is
privacy amplification \cite{bbcm95,rk05},
whose task is to transform
a partially secure key into a highly secure key
in the presence of an adversary with side information.

It has been shown that
a two-universal hash function \cite{cw79}
can be used to provide randomness extraction
against quantum side information,
in which the extractable key length
is lower-bounded by a quantum generalization
of the conditional min-entropy \cite{rr05}.
The extractable key length can also be given by
a quantum generalization of the conditional collision entropy
(conditional {\reny} entropy of order 2) \cite{rr05,rk05}.
It should be stated that
the collision entropy is lower-bounded by the min-entropy
and so gives a better extractable key length
than the min-entropy,
while the min-entropy has several useful properties
such as the monotonicity under quantum operations.

The way to consider
a tighter bound on the length
of an extractable almost random key
is to generalize entropies
by smoothing.
In fact,
the existing extractable key lengths
have been described by smooth entropies,
most of which are defined as the maximization
of entropies
with respect to quantum states 
within a small ball (see e.g. 
\cite{ha12,rr05,rk05,tcr09,tssr11}).
More precisely,
let $\hla$ and $\hlb$ be finite-dimensional Hilbert spaces,
and $\nqs(\hil)$ and $\sqs(\hil)$
denote the sets of normalized and sub-normalized quantum states
on a Hilbert space $\hil$, respectively;
then, for example,
the smooth min-entropy $\smin^{\ep}(\opa|\opb)_{\st}$
of system $\opa$ conditioned on system $\opb$
of a state $\st\in\sqs(\hla\ot\hlb)$
is defined by
\begin{align*}
\smin^{\ep}(\opa|\opb)_{\st}
&=\max_{\st'\in\bll^{\ep}(\st)}
\smin(\opa|\opb)_{\st'},\\
\smin(\opa|\opb)_{\st}
&=\max_{\sgb\in\nqs(\hlb)}\sup\{\lm|2^{-\lm}\id_{\opa}\ot\sgb\ge\st\},
\end{align*}
where $\id_{\opa}$ denotes the identity operator on $\hla$,
and $\bll^{\ep}(\st)
=\big\{\st'\in\sqs(\hil)\big|\pfd(\st,\st')\le\ep\big\}$
for $\ep>0$ and $\st\in\sqs(\hil)$
with $\pfd(\st,\st')=\sqrt{1-\gfd^{2}(\st,\st')}$
and $\gfd(\st,\st')=\tr\big|\sqrt{\st}\sqrt{\st'}\big|
+\sqrt{(1-\tr[\st])(1-\tr[\st'])}$.\footnote{%
$\gfd$ is called the generalized fidelity and 
$\pfd$ the purified distance~\cite{TCR10}.}
The hypothesis testing relative entropy~\cite{th13,wr12}
is based on the operator-smoothing~\cite{BD11,BD10},
which is different from the above standard smoothing
called the state-smoothing,
but is conditioned in the same way as above,
i.e. an operator of the form $\id_{\opa}\ot\sgb$
is introduced and the entropy is maximized
with respect to $\sgb$.

The contributions of this paper are summarized as follows: 
(i) This paper introduces a quantum generalization $\ren_{\ep}$
of the conditional collision entropy with smoothing parameter $\ep$
(Definition~\ref{grigr})
and shows that $\ren_{\ep}$ gives a lower bound on the key length of 
randomness extraction against quantum side information (Theorem~\ref{rethm}).
Since the smoothing and the conditioning for $\ren_{\ep}$
are not standard,\footnote{%
Here, smoothing is called standard
if a quantity of interest 
is maximized with respect to quantum states within a small ball,
and conditioning is called standard 
if an operator of the form $\id_{\opa}\ot\sg_{\opb}$ is introduced 
and a quantity of interest is maximized with respect to $\sg_{\opb}$.} 
the proof of the achievability of randomness extraction
in this paper 
might be rather independent of the existing ones.
(ii) This paper shows that the conditional collision entropy $\igr$ 
optimized with respect to additional side information\footnote{%
It follows from the monotonicity of the relative entropy 
that $\igr$ also gives a key length of 
randomness extraction against quantum side information.} 
(Definition~\ref{grigr}),
which automatically satisfies the strong subadditivity 
and so the data processing inequality (Proposition~\ref{dpineq}), 
is asymptotically optimal (Corollary~\ref{aopt}). 
This result demonstrates that 
the optimization of a conditional entropy 
with respect to additional side information 
can endow the entropy with not only the strong subadditivity 
but also the asymptotic optimality. 
(iii) This paper shows that $\igr$ has a general lower bound 
which is asymptotically optimal (Corollary~\ref{aopt})
and is expressed as
the difference between two unconditional entropies, 
each asymptotically approaching the von Neumann entropy, 
with an additive term of $\ord(\ep)$
for small $\ep$ (Theorem~\ref{gqrlwb}).
Since each unconditional entropy can be determined by
the eigenvalues of a given state,
the evaluation of the lower bound reduces to an eigenvalue problem
of two quantum states,
which are the entire state and the marginal state of side information.

\section{Preliminaries}
Let $\hil$ be a Hilbert space.
For an Hermitian operator $\opx$ on $\hil$
with spectral decomposition
$\opx=\sum_{\ii}\lm_{\ii}\ope_{\ii}$,
let $\{\opx\ge0\}$ denote the projection on $\hil$
given by
\begin{align}
\nonumber
\{\opx\ge0\}=\sum_{\ii:\lm_{\ii}\ge0}\ope_{\ii}.
\end{align}
The projections $\{\opx>0\}$, $\{\opx\le0\}$ and $\{\opx<0\}$
are defined analogously.
Let $\opa$ and $\opb$ be positive operators on $\hil$.
The trace distance $\td({\opa},{\opb})$
and the relative entropy $\re{\opa}{\opb}$
between $\opa$ and $\opb$
are defined as
\begin{align}
\nonumber
\td({\opa},{\opb})
&=\frac{1}{2}\tr[(\opa-\opb)(\{\opa-\opb>0\}
-\{\opa-\opb<0\})],\\
\nonumber
\re{\opa}{\opb}
&=\tr[\opa(\lgt\opa-\lgt\opb)],
\end{align}
respectively,
where $\lgt$ denotes the logarithm to base $2$.
The von Neumann entropy of $\opa$ is defined as
\begin{align}
\nonumber
\vn(\opa)=-\tr\opa\lgt\opa.
\end{align}
For an operator $\opx>0$,
let $\nom{\opx}$ denote the normalization of $\opx$;
that is, $\nom{\opx}=\opx/\tr[\opx]$.
It then follows from
$\vn(\nom{\opa})\le\lgt\rank\nom{\opa}=\lgt\rank\opa$
and
$\re{\nom{\opa}}{\nom{\opb}}\ge0$
that
\begin{align}
\label{vne_ineq}
\vn(\opa)
&\le\tr[\opa]\sbl{\lgt\rank\opa
-\lgt\tr[\opa]},
\\
\label{qre_ineq}
\re{\opa}{\opb}
&\ge
\tr[\opa]\sbl{\lgt\tr[\opa]-\lgt\tr[\opb]}.
\end{align}
More generally,
it can be shown that
inequality (\ref{vne_ineq}) holds for $\opa\ge0$
and 
inequality (\ref{qre_ineq}) holds for $\opa,\opb\ge0$
such that $\supp\opa\subset\supp\opb$,
by using the convention $0\lgt0=0$,
which can be justified by
taking the limit,
$\lim_{\ep\afa0}\ep\lgt\ep=0$.

Let $\ff$ be an operator convex function
on an interval $\iv\subset\Re$.
Let $\{\opx_{\ii}\}_{\ii}$ be a set of
operators on $\hil$ with their spectrum in $\iv$,
and
$\{\opc_{\ii}\}_{\ii}$ be a set of operators on $\hil$
such that $\sum_{\ii}\opc_{\ii}^{\dag}\opc_{\ii}=\id$,
where 
$\id$ is the identity operator on $\hil$.
Then Jensen's operator inequality
for $\ff$, $\{\opx_{\ii}\}_{\ii}$ and $\{\opc_{\ii}\}_{\ii}$
is given by
\begin{align}
\label{opjensen}
\ff\Big(\sum_{\ii}\opc_{\ii}^{\dag}\opx_{\ii}\opc_{\ii}\Big)
\le
\sum_{\ii}\opc_{\ii}^{\dag}\ff(\opx_{\ii})\opc_{\ii}
\end{align}
(see e.g. \cite{bh96,hp03}).

Let $\clx$ and $\cls$ be finite sets
and $\clg$ be a family of functions
from $\clx$ to $\cls$.
Let $\rvg$ be a random variable
uniformly distributed over $\clg$.
Then $\clg$ is called two-universal,
and
$\rvg$ is called a two-universal hash function \cite{cw79},
if
\begin{align}
\label{propuhf}
\pr[\rvg(\xx_{0})=\rvg(\xx_{1})]\le\frac{1}{|\cls|}
\end{align}
for every distinct $\xx_{0},\xx_{1}\in\clx$. 
For example,
the family of all functions from $\clx$ to $\cls$
is two-universal.
A more useful two-universal family is
that of all linear functions
from $\{0,1\}^{n}$ to $\{0,1\}^{m}$.
More efficient families,
which can be described using $O(n+m)$ bits
and have polynomial-time evaluating algorithms,
are discussed in \cite{cw79,wc81}.

Let $\hlx$ and $\hlb$ be Hilbert spaces,
and $\cqst$ be a classical-quantum state on $\hlx\ot\hlq$
given by
\begin{align}
\label{notcqst}
\cqst
=\sum_{\xx\in\clx}
\prx|\xx\ra\la\xx|\ot\stx,
\end{align}
where $\clx$ is a finite set
such that $\{|{\xx}\ra\}_{\xx\in\clx}$ forms an orthonormal basis of $\hlx$,
$\{\prx\}_{\xx\in\clx}$ is a probability distribution on $\clx$,
and $\stx\in\nqs(\hlb)$ 
for $\xx\in\clx$. 
Then the distance $\dusd$ from uniform
of system $\opx$ given system $\opq$ of a state $\st$
can be defined as
\begin{align}
\nonumber
\dusd
=\td(\st_{\opx\opq},\sguq),
\end{align}
where $\mms_{\opx}$ denotes the maximally mixed state on $\hlx$,
i.e. $\mms_{\opx}=\id_{\opx}/|\clx|$.

Instead of the trace distance,
another distance measure
may be used to define the distance from uniform.
For example,
the relative entropy
can be used
to define the distance from uniform of the form
\begin{align}
\label{dfure}
\dusr
=\re{\st_{\opx\opq}}{\sguq}.
\end{align}
Here, quantum Pinsker's inequality
$(2/\lnt)\big(\td(\st,\sg)\big)^{2}\le\re{\st}{\sg}$
(see~\cite{op93})
gives
\begin{align}
\nonumber
(2/\lnt)\big(\dusd\big)^{2}\le\dusr,
\end{align}
which ensures that
an upper bound on $\dusr$
also gives
an upper bound on $\dusd$.
Therefore,
in this paper, we will use $\dusr$,
instead of $\dusd$,
as the measure of the distance from uniform.

\section{Randomness extraction}
First, we introduce a quantum generalization of
the (smoothed) conditional collision entropy
and its optimization
with respect to additional side information.
\begin{definition}
\label{grigr}
Let $\hla$ and $\hlb$ be Hilbert spaces,
and $\stab$ be a quantum state on $\hla\ot\hlb$.
For $\ep\ge0$, 
the information spectrum collision entropy\footnote{%
This name of $\ren_{\ep}$ follows that of
the information spectrum relative entropy 
$D^{\ep}_{s}(\st\vrt\sg)=\sup\{R|\tr[\st\{\st\le2^{R}\sg\}]\le\ep\}$,
which can be considered as an entropic version 
of the quantum information spectrum, $\underline{D}$ and $\overline{D}$
(see~\cite{th13}).} %
$\gqr$ of system $\opa$ conditioned on system $\opb$
of a state $\st$
is given by
\begin{align}
\nonumber
\gqr=\sup_{\lm}
\big\{
\lm{\big|}
\tr\bbl{\mbl{\rxr
-\ext{-\lm}\sts^{2}\le0}\sts}\ge1-\ep
\big\},
\end{align}
where we have introduced
\begin{align}
\nonumber
\rxr
=\tr_{\opa}\bbl{\stab^{2}}.
\end{align}
Moreover, for $\ep\ge0$,
the information spectrum collision entropy $\igqr$
of system $\opa$ conditioned on system $\opb$ of a state $\st$
with optimal side information is given by
\begin{align*}
\igqr=\sup_{\hlc,\stabc:\tr_{\opc}[\stabc]=\stab}
\ren_{\ep}(\opa|\opb\opc)_{\st},
\end{align*}
where the supremum ranges over
all Hilbert spaces $\hlc$ and quantum states $\stabc$
on $\hla\ot\hlb\ot\hlc$
such that $\tr_{\opc}[\stabc]=\stab$.
\end{definition}
In contrast to the conditional von Neumann entropy 
$\cvn=\vn(\stab)-\vn(\stb)$,
$\gqr$ can increase when additional side information is provided;
that is,
\begin{align*}
\ren_{\ep}(\opa|\opb\opc)_{\st}>\gqr
\end{align*}
is possible 
(such side information for the classical collision entropy
is called spoiling knowledge~\cite{bbcm95}).
On the other hand,
$\igqr$ is optimized with respect to additional side information,
and so satisfies the following data processing inequality.
\begin{proposition}
\label{dpineq}
Let $\hla$, $\hlb$ and $\hlbp$ be Hilbert spaces,
and $\stab$ be a quantum state on $\hla\ot\hlb$.
Let $\qch$ be a trace preserving completely positive map
from system $\opb$ to system $\opbp$.
Then
\begin{align*}
\igqr\le\igr(\opa|\opbp)_{\qch(\st)}.
\end{align*}
\end{proposition}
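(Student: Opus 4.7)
The plan is to reduce the claim to the invariance of $\ren_{\ep}$ under isometries on the conditioning system, via Stinespring dilation. First I would invoke Stinespring: write $\qch(\cdot)=\tr_{\opc''}[V(\cdot)V^{\dag}]$ for some isometry $V:\hlb\to\hlbp\ot\hlc''$. Now fix an arbitrary extension $\stabc\in\sqs(\hla\ot\hlb\ot\hlc)$ of $\stab$, and define
\begin{align*}
\sg_{\opa\opbp\opc''\opc}
=(\id_{\opa}\ot V\ot\id_{\opc})\,\stabc\,(\id_{\opa}\ot V^{\dag}\ot\id_{\opc}).
\end{align*}
Since $\tr_{\opc''}[V\stab V^{\dag}]=\qch(\stab)$, the state $\sg$ is a valid extension of $\qch(\stab)$ with auxiliary system $\hlc''\ot\hlc$, and is therefore in the supremum defining $\igr(\opa|\opbp)_{\qch(\st)}$.

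Next I would show $\ren_{\ep}(\opa|\opbp\opc''\opc)_{\sg}=\ren_{\ep}(\opa|\opb\opc)_{\st}$. Writing $W=\id_{\opa}\ot V\ot\id_{\opc}$ and using $V^{\dag}V=\id_{\opb}$, a direct computation gives
\begin{align*}
\sg^{2}=W\,\stabc^{2}\,W^{\dag},\quad
\rxro_{\opbp\opc''\opc}=(V\ot\id_{\opc})\rxrc(V^{\dag}\ot\id_{\opc}),\quad
\sg_{\opbp\opc''\opc}=(V\ot\id_{\opc})\st_{\opb\opc}(V^{\dag}\ot\id_{\opc}).
\end{align*}
Thus $\rxro_{\opbp\opc''\opc}-\ext{-\lm}\sg_{\opbp\opc''\opc}^{2}$ equals conjugation of $\rxrc-\ext{-\lm}\st_{\opb\opc}^{2}$ by the isometry $V\ot\id_{\opc}$. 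Because conjugation by an isometry preserves the nonzero spectrum while filling the orthogonal complement of the range with zero eigenvalues (which lie in $\{\cdot\le0\}$ but have no overlap with $\sg_{\opbp\opc''\opc}$), one obtains
\begin{align*}
\tr\bbl{\mbl{\rxro_{\opbp\opc''\opc}-\ext{-\lm}\sg_{\opbp\opc''\opc}^{2}\le0}\sg_{\opbp\opc''\opc}}
=\tr\bbl{\mbl{\rxrc-\ext{-\lm}\st_{\opb\opc}^{2}\le0}\st_{\opb\opc}}.
\end{align*}
Hence the sets of $\lm$ appearing in the two suprema defining $\ren_{\ep}$ coincide, giving the claimed equality.

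Combining the two steps, for every extension $\stabc$ of $\stab$ one has $\ren_{\ep}(\opa|\opb\opc)_{\st}=\ren_{\ep}(\opa|\opbp\opc''\opc)_{\sg}\le\igr(\opa|\opbp)_{\qch(\st)}$. Taking the supremum over $(\hlc,\stabc)$ on the left yields $\igqr\le\igr(\opa|\opbp)_{\qch(\st)}$. The main obstacle, and the step that deserves the most care in the full write-up, is the equality of the two trace expressions above: one must verify that the eigenvalues contributed by the orthogonal complement of the isometric range do not alter the trace because $\sg_{\opbp\opc''\opc}$ is supported on that range, and that $W^{\dag}W=\id$ cleanly collapses the computation to the original system. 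Everything else is routine.
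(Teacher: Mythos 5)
Your proposal is correct and follows essentially the same route as the paper: Stinespring dilation of $\qch$, invariance of $\ren_{\ep}$ under an isometry on the conditioning system, and the observation that the dilated state is an admissible extension in the supremum defining $\igr(\opa|\opbp)_{\qch(\st)}$. You actually supply more detail than the paper on the isometry-invariance step (correctly handling the zero eigenvalues on the complement of the isometric range) and avoid its implicit assumption that the supremum defining $\igqr$ is attained, by working with an arbitrary extension and taking the supremum at the end.
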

\begin{proof}
The definitions of $\ren_{\ep}$ and $\igr$ at once give that
$\gqr$ is invariant under the adjoint action of an isometry on system $\opb$
and $\ren_{\ep}(\opa|\opb\opc)_{\st}\le\igqr$ for any $\stabc$.
Moreover,
the Stinespring dilation theorem (see~\cite{st55})
ensures that 
there exist a Hilbert space $\hle$ and
an isometry $\ism:\hlb\rightarrow\hlbp\ot\hle$ such that
\begin{align*}
\qch(\stb)=\tr_{\ope}\bbl{\ism\stb\ism^{\dag}}
\end{align*}
for any $\stb\in\nqs(\hlb)$.
Therefore,
for any Hilbert space $\hlc$ and
$\stabc\in\nqs(\hla\ot\hlb\ot\hlc)$
such that $\tr_{\opc}[\stabc]=\stab$,
\begin{align*}
\ren_{\ep}(\opa|\opb\opc)_{\st}
=\ren_{\ep}(\opa|\opbp\ope\opc)_{\ism\st\ism^{\dag}}
\le\igr(\opa|\opbp)_{\qch(\st)},
\end{align*}
and so
\begin{align*}
\igqr&=\sup_{\hlc,\stabc:\tr_{\opc}[\stabc]=\stab}
\ren_{\ep}(\opa|\opb\opc)_{\st}\\
&\le\igr(\opa|\opbp)_{\qch(\st)}.
\end{align*}
This completes the proof.
\end{proof}
We are now ready to state a main theorem.
Note that the monotonicity of the relative entropy
enables to replace $\gqr$ in this theorem by $\igqr$.
\begin{theorem}
\label{rethm}
Let $\clx$ and $\cls$ be finite sets,
and $\clg$ be a two-universal family of hash functions from $\clx$ to $\cls$.
Let $\hlx$, $\hls$ and $\hlg$ be Hilbert spaces 
of dimensions $|\clx|$, $|\cls|$ and $|\clg|$, respectively.
Let $\hlb$ be a Hilbert space, 
and $\cqst$ be a classical-quantum state on $\hlx\ot\hlb$.
Let $\uhst$ be the maximally mixed state on $\hlg$
independent of $\cqst$,
and suppose that the classical channel from $\opx$ to $\ops$ 
induced by the two-universal hash function $\opg$
maps $\cqst\ot\uhst$ to $\stsqg$.
Then
\begin{align}
\label{result}
\dusrt
\le
\ep\lgt\sbl{\dd|\cls|}
+\etz(\ep)
+\frac{\dl+\ep+\ep^{\hlf}}{\ln2}
\end{align}
for $\ep\ge0$,
where $\etz$ is a function on $[0,\infty)$
given by
\begin{align}
\label{defet}
\etz(\ep)=
\left\{
\begin{array}{cl}
-\ep\lgt\ep \quad& \mathrm{for}\ 0\le\ep\le1/2,\\
1/2 & \mathrm{for}\ \ep>1/2,
\end{array}
\right.
\end{align}
and we have introduced
\begin{align}
\nonumber
\dd=\rank\hspace{1pt}\sts
\quad
{\text{and}}
\quad
\dl=|\cls|\ext{-\lme}.
\end{align}
\end{theorem}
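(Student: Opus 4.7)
The plan is to bound $\dusrt=\re{\st_{\ops\opg\opb}}{\sg_{\ops\opg\opb}}$, where $\sg_{\ops\opg\opb}:=\tfrac{1}{|\cls|}\id_{\ops}\ot\st_{\opg}\ot\sts$ (the independence of $\rvg$ from $\st$ giving $\st_{\opg\opb}=\st_{\opg}\ot\sts$), by combining the standard two-universal hashing collision argument with a smoothing step tailored to the definition of $\lme$.

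First I would invoke the definition of $\lme$ to extract, for each $\lm<\lme$, the spectral projector $\prj:=\mbl{\rxr-\ext{-\lm}\sts^{2}\le0}$ on $\hlb$, which by construction satisfies $\tr[\prj\sts]\ge1-\ep$ and, by the spectral theorem applied to the Hermitian operator $\rxr-\ext{-\lm}\sts^{2}$, the key operator inequality $\prj\rxr\prj\le\ext{-\lm}\prj\sts^{2}\prj$. I would then form the sub-normalized smoothed state $\tilde{\st}_{\opx\opb}:=(\id_{\opx}\ot\prj)\st_{\opx\opb}(\id_{\opx}\ot\prj)$. The gentle measurement lemma guarantees $\fdm\sbl{\st_{\opx\opb},\tilde{\st}_{\opx\opb}/\tr[\tilde{\st}_{\opx\opb}]}\le\sqrt{\ep}$, and monotonicity under the hashing channel $\rvx\mapsto\rvg(\rvx)$ transports this estimate to $\ops\opg\opb$.

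Second, I would run the collision-type hashing bound on the smoothed state. Writing $\tilde\st_{\g,\s}:=\sum_{\xx:\g(\xx)=\s}\prx\prj\stx\prj$, a direct expansion together with (\ref{propuhf}) produces the averaged second-moment estimate
\begin{align*}
\frac{1}{|\clg|}\sum_{\g,\s}\tilde\st_{\g,\s}^{2}
&\le\prj\rxr\prj+\tfrac{1}{|\cls|}\sbl{\prj\sts\prj}^{2}\\
&\le\tfrac{1+|\cls|\ext{-\lm}}{|\cls|}\sbl{\prj\sts\prj}^{2},
\end{align*}
the second inequality consuming the operator bound from the first step. I would translate this operator estimate into a relative-entropy bound using the Petz {\reny}-$2$ upper bound $\re{\cdot}{\cdot}\le\lgt\tr[(\cdot)^{2}(\cdot)^{-1}]$ together with $\lgt(1+x)\le x/\ln2$, which yields the main contribution $\dl/\ln2$ for the smoothed problem.

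Finally, I would account for the smoothing cost in returning from the smoothed problem to the original one. Continuity of the relative entropy about a fixed reference state, controlled through Fannes-type estimates and the preliminary inequalities (\ref{vne_ineq})--(\ref{qre_ineq}), should produce the dimensional contribution $\ep\lgt(\dd|\cls|)+\etz(\ep)+\ep/\ln2$, while the $\sqrt{\ep}/\ln2$ term absorbs the first-order gentle-measurement error in the {\reny}-$2$ step. The hard part will be the non-commutativity among $\prj$, $\sts$ and $\rxr$: the operator inequality $\prj\rxr\prj\le\ext{-\lm}\prj\sts^{2}\prj$ does not by itself produce a scalar bound on $\tr[\rxr\sts^{-1}]$, because sandwiching by $\sts^{-1/2}$ does not commute with $\prj$ and $\prj\sts^{2}\prj\ne\sbl{\prj\sts\prj}^{2}$ in general. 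I anticipate handling this by replacing $\sts$ throughout the reference state by its smoothed counterpart $\prj\sts\prj/\tr[\prj\sts]$ and then applying Jensen's operator inequality (\ref{opjensen}) to the operator-convex function $t\mapsto t^{-1}$, so that the resulting first-order discrepancy is precisely what is absorbed into the $\ep/\ln2$ and $\sqrt{\ep}/\ln2$ terms of (\ref{result}).
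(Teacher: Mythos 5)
You have the right skeleton---the projection $\prj=\mbl{\rxr-\ext{-\lm}\sts^{2}\le0}$ with $\tr[\prj\sts]\ge1-\ep$ and $\prj\rxr\prj\le\ext{-\lm}\prj\sts^{2}\prj$, the diagonal/off-diagonal split of the collision sum via two-universality, and Jensen's operator inequality to tame non-commutativity---and this matches the paper's architecture. But there are two genuine gaps. First, your displayed second-moment chain is wrong as written: its second inequality needs $\prj\rxr\prj\le\ext{-\lm}\sbl{\prj\sts\prj}^{2}$, whereas the projection only gives $\prj\rxr\prj\le\ext{-\lm}\prj\sts^{2}\prj$, and since $\prj\sts^{2}\prj-\sbl{\prj\sts\prj}^{2}=\prj\sts(\id-\prj)\sts\prj\ge0$, the inequality you would need runs in the opposite direction. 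This is exactly the non-commutativity obstruction you flag at the end, but it already invalidates the estimate you present as established.

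Second, and more seriously, your mechanism for paying the smoothing cost does not produce the claimed error terms. Replacing the reference state $\sts$ by $\prj\sts\prj/\tr[\prj\sts]$ inside a relative entropy (or inside the Petz R\'enyi-2 bound) and appealing to ``continuity about a fixed reference state'' is not controlled by $\ep\lgt(\dd|\cls|)+\etz(\ep)$: perturbations of the second argument of a relative entropy are governed by $\lgt\lambda_{\min}(\sts)^{-1}$ and by support inclusion (which fails here, since $\st_{\ops\opg\opb}$ generally has weight outside $\rg(\prj)$), not by $\rank\sts$. The paper never perturbs the reference state: it expands $\dusrt=\sum_{\g,\s}\prg\tr[\asg\lgt\asg]-\tr[\sts\lgt\sts]+\lgt|\cls|$, applies Jensen for $-\lgt$ with the operators $\csg=(\prg\asg)^{\hlf}\prj\stp^{-\hlf}$ where $\stp=\prj\sts\prj$, and then splits the residual $\rs$ into $\rsf=-\re{\stp}{\stp^{-\hlf}\prj\sts^{2}\prj\stp^{-\hlf}}$, bounded via (\ref{qre_ineq}) together with a Schwarz-inequality estimate of $\tr[\stp^{-\hlf}\prj\sts(\id-\prj)\sts\prj\stp^{-\hlf}]$ (the actual source of the $(\ep+\ep^{\hlf})/\ln2$ term), and $\rss=\vn(\sts)-\vn(\stp)$, bounded by pinching and (\ref{vne_ineq}) (the source of $\ep\lgt\dd+\etz(\ep)$). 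Your sketch names Jensen's inequality and Fannes-type continuity but leaves the decomposition that makes these constants come out---which is the entire content of the proof---unexecuted, and the specific route you indicate (R\'enyi-2 domination plus reference-state continuity) would not close.
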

\begin{proof}
Direct calculation shows that
\begin{align}
\label{fsteq}
\begin{split}
&\dusrt\\
&=\sum_{\g,\s}
\prg\tr\bbl{\asg\lgt\asg}
-\tr\bbl{\sts\lgt\sts}
+\lgt|\cls|
\end{split}
\end{align}
with $\prg=1/|\clg|$,
where we have defined
\begin{align}
\label{defasg}
\asg=\sum_{\xx\in\g^{-1}(\s)}\prx\stx
\end{align}
(see (\ref{notcqst}) for the notation of a classical-quantum state).
Since the real function $-\lgt$
is operator convex on $(0,\infty)$
(see e.g. \cite{bh96}),
we can estimate
the first term of the right-hand side of (\ref{fsteq})
by applying Jensen's operator inequality as follows.
Let $\ff=-\lgt$
and introduce the operators $\xsg$ and $\csg$
by writing
\begin{align}
\nonumber
\xsg=\asg+\gm\id
\quad
{\text{and}}
\quad
\csg=(\prg\asg)^{\hlf}
\prj\stp^{-\hlf}
\end{align}
for $\gm>0$,
where we have defined
\begin{align}
\nonumber
\stp=\prj\sts\prj
\quad
{\text{and}}
\quad
\prj=\mbl{\rxr-\ext{-\len}\sts^{2}\le0}
\end{align}
for $\len<\lme$.
It readily follows that
$\xsg>0$
and
$\sum_{\s,\g}\csg^{\dag}\csg=\idr$,
where $\idr$ denotes the projection
onto the range of $\stp$.
Furthermore, let us define the operators
$\xss$ and $\css$ by
\begin{align}
\nonumber
\xss=\id
\quad
{\text{and}}
\quad
\css=\id-\idr,
\end{align}
so that
\begin{align}
\nonumber
\ff(\xss)=0
\quad
{\text{and}}
\quad
\sum_{\s,\g}\csg^{\dag}\csg+\css^{\dag}\css=\id.
\end{align}
Then
by Jensen's operator inequality (\ref{opjensen}),
\begin{align}
\nonumber
\ff\Big(\sum_{\s,\g}\csg^{\dag}\xsg\csg
+\css^{\dag}\xss\css\Big)
\le
\sum_{\s,\g}\csg^{\dag}\ff(\xsg)\csg.
\end{align}
Here, $\sum_{\s,\g}\csg^{\dag}\xsg\csg$
is an operator on the range $\rg(\stp)$ of $\stp$,
while $\css^{\dag}\xss\css$
is an operator on its orthogonal complement $\rg(\stp)^{\perp}$.
It thus follows that
\begin{align}
\nonumber
\stp^{\hlf}\ff\Big(
\sum_{\s,\g}\csg^{\dag}\xsg\csg
\Big)\stp^{\hlf}
\le
\sum_{\s,\g}\stp^{\hlf}\csg^{\dag}\ff(\xsg)\csg\stp^{\hlf},
\end{align}
which, in the limit $\gm\afa0$, leads to
\begin{align}
\label{qj_applied}
\begin{split}
&\sum_{\s,\g}
\prg\prj\asg^{\hlf}\big(\lgt\asg\big)\asg^{\hlf}\prj\\
&\le
\stp^{\hlf}
\bigg(
\lgt\sum_{\s,\g}
\prg
\stp^{-\hlf}
\prj\asg^{2}\prj
\stp^{-\hlf}
\bigg)
\stp^{\hlf}.
\end{split}
\end{align}

To estimate the right-hand side of (\ref{qj_applied}),
let us estimate
the sum $\sum_{\s,\g}\prg\prj\asg^{2}\prj$.
Substitution of (\ref{defasg}) into this sum
gives
\begin{align}
\nonumber
\sum_{\s,\g}
\prg\prj\asg^{2}\prj
=\sum_{\g,\xx,\xxd}
\prg\prx\prxd\pp(\g(\xx){=}\g(\xxd))
\prj\stx\stxd\prj.
\end{align}
Here,
we divide the sum of the right-hand side into two parts
so that one part consists of the terms with $\xx=\xxd$
and the other part consists of the remaining terms.
It follows from the definition of $\prj$ that
the former part can be bounded as
\begin{align}
\nonumber
\sum_{\g,\xx,\xxd:\xx=\xxd}
\prg\prx\prxd\pp(\g(\xx){=}\g(\xxd))
\prj\stx\stxd\prj
&=\prj\rxr\prj\\
\nonumber
&\le\ext{-\len}\prj\sts^{2}\prj.
\end{align}
By using (\ref{propuhf}), 
the latter part can also be bounded as
\begin{align}
\nonumber
\begin{split}
&\sum_{\g,\xx,\xxd:\xx\neq\xxd}
\prg\prx\prxd\pp(\g(\xx){=}\g(\xxd))
\prj\stx\stxd\prj\\
&=
\sum_{\xx,\xxd:\xx\neq\xxd}
\prx\prxd
\prj\stx\stxd\prj
\sum_{\g}
\prg\pp(\g(\xx){=}\g(\xxd))\\
&\le
\frac{1}{|\cls|}
\sum_{\xx,\xxd:\xx\neq\xxd}
\prx\prxd
\prj\stx\stxd\prj\\
&\le
\frac{1}{|\cls|}\prj\sts^{2}\prj.
\end{split}
\end{align}
The above two inequalities at once give
\begin{align}
\label{bndsum}
\sum_{\s,\g}
\prg\prj\asg^{2}\prj
\le
\frac{1}{|\cls|}
(1+\dlr)
\prj\sts^{2}\prj
\end{align}
with $\dlr=|\cls|\ext{-\len}$.
Note here that
$\lgt\asg\le0$ and $\asg\ge\asg^{\hlf}\prj\asg^{\hlf}$,
and so
\begin{align}
\nonumber
\sum_{\g,\s}
\prg\tr\bbl{\asg\lgt\asg}
\le
\sum_{\s,\g}
\prg
\tr\bbl{\asg^{\hlf}\prj\asg^{\hlf}\lgt\asg}.
\end{align}
Therefore,
by taking the trace of both sides of (\ref{qj_applied})
and then using (\ref{bndsum}) and $\tr[\stp]\le1$,
we obtain
\begin{align}
\nonumber
\dusrt
\le
(1-\tr[\stp])\lgt|\cls|
+\lgt(1+\dlr)+\rs,
\end{align}
where we have introduced
\begin{align}
\nonumber
\rs=
\tr\bbl{\stp\lgt\sbl{\stp^{-\hlf}
\prj\sts^{2}\prj\stp^{-\hlf}}}
-\tr[\sts\lgt\sts].
\end{align}
Furthermore,
by using $\tr[\stp]\ge1-\ep$ and
$\lgt(1+\xx)\le\xx/\ln2$ for $\xx\ge0$,
this inequality can be simplified to
\begin{align}
\label{semi-fin}
\dusrt
\le
\ep\lgt|\cls|
+\frac{\dlr}{\ln2}+\rs.
\end{align}

It remains to estimate $\rs$.
Let us write $\rs$
in the form $\rs=\rsf+\rss$,
where
\begin{align}
\nonumber
\rsf&=\tr\bbl{\stp\lgt\sbl{\stp^{-\hlf}
\prj\sts^{2}\prj\stp^{-\hlf}}}
-\tr[\stp\lgt\stp],\\
\nonumber
\rss&=\tr[\stp\lgt\stp]-\tr[\sts\lgt\sts].
\end{align}
First, we estimate the first part $\rsf$.
Let $\ops=\stp$ and 
$\opt=\stp^{-\hlf}\prj\sts^{2}\prj\stp^{-\hlf}$.
Since
\begin{align}
\nonumber
\opt-\ops
=\stp^{-\hlf}\prj\sts(\id-\prj)\sts\prj\stp^{-\hlf}
\ge0,
\end{align}
and hence $\supp\ops\subset\supp\opt$,
inequality (\ref{qre_ineq})
can be applied to yield
\begin{align}
\nonumber
\rsf=-\re{\ops}{\opt}
\le
\tr[\ops]\lgt\frac{\tr[\opt]}{\tr[\ops]}.
\end{align}
It is now convenient to define $\df=\tr[\opt-\ops]$,
which can be written as
\begin{align}
\nonumber
\df
&=\tr\bbl{\stp^{-\hlf}\prj\sts(\id-\prj)\sts\prj\stp^{-\hlf}}\\
\nonumber
&=\tr\bbl{(\id-\prj)\sts^{\hlf}\sts^{\hlf}\prj\stp^{-1}\prj\sts}.
\end{align}
Hence by
Schwarz's inequality,
\begin{align}
\nonumber
\df
\le
\big(
\tr\bbl{(\id-\prj)\sts(\id-\prj)}
\tr\bbl{\sts\prj\stp^{-1}\prj\sts}
\big)^{\hlf}.
\end{align}
By use of $\tr[(\id-\prj)\sts]\le\ep$ and
$\tr\bbl{\sts\prj\stp^{-1}\prj\sts}
=\tr[\opt]
=\tr[\ops]+\df$,
this inequality can be simplified to
$\df\le\ep^{\hlf}\sbl{\tr[\ops]+\df}^{\hlf}$,
which, together with $\tr[\ops]=\tr[\stp]\le1$,
gives
\begin{align}
\nonumber
\df\le
\frac{\ep+\sbl{\ep^{2}+4\ep\tr[\ops]}^{\hlf}}{2}
\le
\frac{\ep+\sbl{\ep+2\ep^{\hlf}}}{2}
=\ep+\ep^{\hlf}.
\end{align}
Therefore
\begin{align}
\label{estdo}
\rsf\le
\tr[\ops]\lgt\frac{\tr[\ops]+\df}
{\tr[\ops]}
\le\frac{\ep+\ep^{\hlf}}{\ln2}.
\end{align}

Next,
we estimate the second part $\rss$.
Let $\kp_{\prj}(\sts)=\prj\sts\prj+(\id-\prj)\sts(\id-\prj)$.
Since $\sts$ and $\kp_{\prj}(\sts)$ are density operators,
$\re{\sts}{\kp_{\prj}(\sts)}\ge0$,
and hence
\begin{align}
\nonumber
-\vn(\sts)
+\vn(\stp)
+\vn((\id-\prj)\sts(\id-\prj))
\ge0.
\end{align}
From this and (\ref{vne_ineq}),
\begin{align}
\label{estdt}
\begin{split}
\rss
&=-\vn(\stp)+\vn(\sts)
\le
\vn((\id-\prj)\sts(\id-\prj))\\
&\le
\ep\lgt\dd
+\etz(\ep),
\end{split}
\end{align}
where $\dd=\rank\sts$ and
$\etz$ is a monotone increasing function on $[0,\infty)$
defined by (\ref{defet}).\footnote{%
This inequality is a special case of Fannes inequality~\cite{F73}.}
Now,
the required inequality (\ref{result})
follows from
(\ref{semi-fin}), (\ref{estdo}) and (\ref{estdt})
with taking the limit $\len\afb\lme$.
This completes the proof.
\end{proof}

Let $\lmr^{\ep}_{D}(\opx|\opb)$ denote the maximal length
of randomness of distance $\ep$ from uniform
measured by the relative entropy $D$ (see (\ref{dfure})),
extractable from system $\opx$ given system $\opb$.
It then follows from this theorem that
\begin{align}
\label{grupb}
\lmr^{\ep'}_{D}(\opx|\opb)
\ge\igr(\opx|\opb)+\lgt\dl
\end{align}
with
\begin{align}
\label{eppdef}
\ep'=\ep\lgt\sbl{\dd|\clx|}
+\etz(\ep)
+\frac{\dl+\ep+\ep^{\hlf}}{\ln2}
\end{align}
(where we have used $|\cls|\le|\clx|$).
Since the smooth min-entropy $\smin^{\ep}$
upper-bounds the maximal length $\lmr^{\ep}$ of randomness 
(see e.g. \cite{th13}), 
this theorem also gives that
$\igr(\opx|\opb)$ is upper-bounded by $\smin^{\ep}$ as
\begin{align*}
\igr(\opx|\opb)&\le\lmr^{\ep'}_{D}(\opx|\opb)-\lgt\dl\\
&\le\smin^{\smpr}(\opx|\opb)_{\st}-\lgt\dl
\end{align*}
with $\smpr=(2(\lnt)\ep')^{1/4}$,
where we have used $\pfd(\st,\sg)\le\sqrt{2\td(\st,\sg)}$
(see~\cite{TCR10})
and quantum Pinsker's inequality.\footnote{%
The distance $\pfd(\ops|\opq\opg)_{\sg}$ from uniform in~\cite{th13} is defined 
by use of the purified distance
as $\pfd(\ops|\opq\opg)_{\sg}=\min_{\tau_{\opq}}
\pfd(\stsqg,\mms_{\ops}\ot\tau_{\opq}\ot\sg_{\opg})$,
which is upper-bounded by $\pfd(\stsqg,\mms_{\ops}\ot\sg_{\opq\opg})$.
}

It can be seen from inequality (\ref{result}) that 
the lower dimension $\dd$ of side information
gives the longer key length of randomness extraction.
Here, it may help to note that 
the monotonicity of the relative entropy allows us 
to assume that the dimension $\dd$ is controlled 
by {\legp} even in cryptographic applications 
where side information should be assumed 
under the full control of an adversary.
For the case where the {\legp} use high-dimensional states, 
one may consider a reduction to lower-dimensional states described below.
Let $\clx$ be a finite set.
For any set of quantum states $\{\stx\}_{\xx\in\clx}$ on $\hlb$,
one can construct a set of pure states 
$\{\stas_{\xx}\}_{\xx\in\clx}$ on $\hlbs$
such that there exists a trace preserving completely positive map
$\qch:\opbs\rightarrow\opb$
satisfying $\qch(\stas_{\xx})=\stx$ for all $\xx\in\clx$~\cite{cjw04}.
Therefore, for the case where $d=\rank\stb$ is high,
we may substitute $\ren(\opx|\opb)_{\qch(\st)}$
by $\ren(\opx|\opbs)_{\st}$
with $\st_{\opx\opbs}=\sum_{\xx}\prx\brk{\xx}\ot\stas_{\xx}$,
where $\ren(\opx|\opbs)_{\st}\le\ren(\opx|\opb)_{\qch(\st)}$ 
and $\rank\stbs\le|\clx|$.
(Here, the latter inequality is an advantage
but the former is a disadvantage of this reduction.)

In randomness extraction against
classical side information~$\rvy$,
the distance from uniform can be upper-bounded as
$\dusrt
\le\ep\lgt|\cls|+\dl/\ln2$
if $\rvg$ and $\rvy$ are independent~\cite{bbcm95},
and as
\[
\dusrt
\le\ep\lgt|\cls|+\frac{\dl+\ep}{\ln2}
\]
if $\rvy$ may depend on $\rvg$ 
\cite{yw07}.
(It should be stated that,
in quantum key distribution,
adversary's measurement can wait
until the choice of hash functions is announced,
and so adversary's information $\rvy$
may depend on the choice $\rvg$).
Here, we note that for a purely classical state $\stxy$,
$\ren(\opx|\opy)_{\st}$ becomes
\begin{align}
\nonumber
\rep=\sup_{\lm}\mbl{\lm|
\pr[\rvy\in\{\yy|\ren(\rvx|\rvy=\yy)\ge\lm\}]
\ge1-\ep},
\end{align}
where 
$\ren(\rvx|\rvy=\yy)
=-\lgt\sum_{\xx}
\pr[\rvx=\xx|\rvy=\yy]^{2}$.
Since
$\rep$ coincides with the (smoothed) conditional collision entropy
given by \cite{bbcm95,yw07},
$\gqr$ can be considered as its quantum generalization.
Hence it may be of interest to compare the results of these works.
It can be seen that
the upper bound given in this work (see (\ref{result}))
is larger than
that given in \cite{yw07} (see above) by
$\ep\lgt(\dd/\ep)+\ep^{\hlf}/\ln2$, 
which is $\ord(\ep^{\hlf})$ as $\ep\afa0$.

\section{Asymptotic optimality}
We first introduce two information spectrum entropies
which asymptotically approach the von Neumann entropy.
\begin{definition}
\label{defsie}
Let $\sta$ be a quantum state on a Hilbert space $\hla$.
Then, for $\ep\ge0$,
the information spectrum sup-entropy $\uvn_{\ep}(\opa)_{\st}$
and inf-entropy $\lvn_{\ep}(\opa)_{\st}$ of system $\opa$ of a state $\st$
are given by
\begin{alignat*}{2}
\uvn_{\ep}(\opa)_{\st}&=&
\inf_{\lm}&\big\{\lm{\big|}
\tr\bbl{\mbl{\st\ge2^{-\lm}}\st}\ge1-\ep\big\},\\
\lvn_{\ep}(\opa)_{\st}&=& \,
\sup_{\lm}&\big\{\lm{\big|}
\tr\bbl{\mbl{\st\le2^{-\lm}}\st}\ge1-\ep\big\},
\end{alignat*}
respectively.\footnote{%
This definition can be derived immediately
from the information spectrum relative entropy~\cite{th13}
via the formula $\vn(\st)=-\re{\st}{\id}$.}
\end{definition}
\begin{proposition}
\label{isvnasy}
Let $\sta$ be a quantum state on a Hilbert space $\hla$
of finite dimension $\dma$.
Then, for $\gm>0$,
\begin{align*}
\uvn_{\uep}(\opa^{\nn})_{\st^{\ot\nn}}
&\le\nn\big(\vn(\opa)_{\st}+\gm\big),\\
\lvn_{\lep}(\opa^{\nn})_{\st^{\ot\nn}}
&\ge\nn\big(\vn(\opa)_{\st}-\gm\big),
\end{align*}
with
\begin{align}
\label{defexp}
\uep=(1+\nn)^{\dma}2^{-\nn\uexp{\sta,\gm}}
\quad\text{and}\quad
\lep=(1+\nn)^{\dma}2^{-\nn\lexp{\sta,\gm}},
\end{align}
where we have introduced
\begin{align*}
\uexp{\st,\gm}
&=\inf_{\sg\in\nqs(\hil):\st\sg=\sg\st,
\vn(\sg)+\re{\sg}{\st}>\vn(\st)+\gm}
\re{\sg}{\st},\\
\lexp{\st,\gm}
&=\inf_{\sg\in\nqs(\hil):\st\sg=\sg\st,
\vn(\sg)+\re{\sg}{\st}<\vn(\st)-\gm}
\re{\sg}{\st},
\end{align*}
for $\st\in\nqs(\hil)$ and $\gm>0$.
\end{proposition}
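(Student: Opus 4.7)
My plan is to reduce both inequalities to a classical method-of-types argument by diagonalizing $\st$. Writing $\st=\sum_{\ii=1}^{\dma}p_{\ii}\brk{\ii}$ makes $\st^{\ot\nn}$ diagonal in the product basis, and each basis vector whose index sequence has empirical type $q=(q_{1},\ldots,q_{\dma})$ is an eigenvector with eigenvalue $\prod_{\ii}p_{\ii}^{\nn q_{\ii}}=2^{-\nn(H(q)+\re{q}{p})}$, where $H$ and $\re{\cdot}{\cdot}$ denote the classical Shannon entropy and Kullback--Leibler divergence. Standard type counting provides at most $2^{\nn H(q)}$ sequences per type and at most $(1+\nn)^{\dma}$ types in total, and one has $\vn(\opa)_{\st}=H(p)$.

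For the sup-entropy bound I would set $\lm=\nn(\vn(\opa)_{\st}+\gm)$ and observe that $\mbl{\st^{\ot\nn}<2^{-\lm}}$ projects exactly onto those basis vectors whose type satisfies $H(q)+\re{q}{p}>H(p)+\gm$. Bounding the number of sequences of each type by $2^{\nn H(q)}$ yields
\begin{align*}
\tr\bbl{\mbl{\st^{\ot\nn}<2^{-\lm}}\st^{\ot\nn}}
\le\sum_{q:H(q)+\re{q}{p}>H(p)+\gm}2^{-\nn\re{q}{p}}.
\end{align*}
Each admissible type $q$ lifts to the commuting quantum state $\sg_{q}=\sum_{\ii}q_{\ii}\brk{\ii}$, for which $\vn(\sg_{q})=H(q)$ and $\re{\sg_{q}}{\st}=\re{q}{p}$; hence $\sg_{q}$ is feasible in the infimum defining $\uexp{\st,\gm}$, so $\re{q}{p}\ge\uexp{\st,\gm}$. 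Combining with the $(1+\nn)^{\dma}$ bound on the number of types, the right-hand side is at most $\uep$, which is exactly the defining condition for $\uvn_{\uep}(\opa^{\nn})_{\st^{\ot\nn}}\le\lm$.

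The inf-entropy bound follows by the symmetric argument: for $\lm=\nn(\vn(\opa)_{\st}-\gm)$ the projection $\mbl{\st^{\ot\nn}>2^{-\lm}}$ picks out types with $H(q)+\re{q}{p}<H(p)-\gm$, each lifted state $\sg_{q}$ is feasible for $\lexp{\st,\gm}$, and the same counting argument gives total probability at most $(1+\nn)^{\dma}2^{-\nn\lexp{\st,\gm}}=\lep$.

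The bulk of the argument is purely classical and routine; the only point that requires care is the quantum-to-classical translation of the infima. I need to verify that the minimum of $\re{q}{p}$ over admissible classical types upper-bounds the quantum infima $\uexp{\st,\gm}$ and $\lexp{\st,\gm}$ in the correct direction. Since spectral degeneracies of $\st$ only enlarge the quantum feasible set (and can only decrease the infimum), this direction is automatic, which is why I expect the argument to go through cleanly rather than running into a genuine obstacle.
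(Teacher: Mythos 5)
Your proof is correct and follows essentially the same route as the paper: both reduce the statement to a classical one via the eigenvalue distribution of $\st$ and then invoke the large-deviation (Sanov/method-of-types) upper bound $(1+\nn)^{\dma}2^{-\nn\inf\re{q}{p}}$ over the admissible types. The only difference is that the paper cites Sanov's theorem as a black box while you unfold the type-counting argument explicitly, and your remark that degeneracies of $\st$ only enlarge the feasible set of commuting $\sg$ (hence only decrease the infima, which is the harmless direction) correctly disposes of the one subtlety in the quantum-to-classical translation.
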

\begin{proof}
Note that $\uvn_{\ep}(\opa)_{\st}$ and $\lvn_{\ep}(\opa)_{\st}$ 
can be described by the probability distribution
induced by the eigenvalues of $\sta$.
Hence, the proposition is a direct consequence of 
Sanov's theorem (see e.g.~\cite{ct06}),
which gives that, in our notation,
\begin{align*}
\tr\bbl{\st^{\ot\nn}\{2^{-\nn\elb}
<\st^{\ot\nn}<2^{-\nn\eub}\}}
\le(1+\nn)^{\dd}2^{-\nn D(\st;\elb,\eub)}
\end{align*}
with
\begin{align*}
D(\st;\elb,\eub)
=\inf_{\sg\in\nqs(\hil):\st\sg=\sg\st,
\eub<\vn(\sg)+\re{\sg}{\st}<\elb}
\re{\sg}{\st}
\end{align*}
for $\st\in\nqs(\hil)$,
where $\hil$ is a Hilbert space of finite dimension~$\dd$.
\end{proof}

Next, we give a general lower bound on $\igr$
in terms of the two information spectrum entropies introduced above,
and then show the asymptotic optimality of $\igr$.
Since each information spectrum entropy is determined
by the eigenvalues of a quantum state, the evaluation of the lower bound
reduces to the eigenvalue problem of
two quantum states $\stab$ and $\sta$.
\begin{theorem}
\label{gqrlwb}
Let $\hla$ and $\hlb$ be Hilbert spaces,
and $\stab$ be a quantum state on $\hla\ot\hlb$.
Then, for $\epab,\epb>0$,
\begin{align*}
\igqr\ge\lvn_{\epab}(\opa\opb)_{\st}
-\uvn_{\epb}(\opb)_{\st}
+\lgt\big(1-\epab^{1/2}\big)
\end{align*}
with
\begin{align*}
\nonumber
\ep=\epab^{1/2}+\epab+\epb.
\end{align*}
\end{theorem}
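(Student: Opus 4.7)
The plan is to exhibit an explicit extension $\stabc$ of $\stab$ and to verify that this extension achieves the claimed lower bound on $\ren_{\ep}(\opa|\opb\opc)_{\st}$; since $\igqr$ is a supremum over all such extensions, this suffices. First, introduce the shorthand $\lwb=\lvn_{\epab}(\opa\opb)_{\st}$ and $\upb=\uvn_{\epb}(\opb)_{\st}$, together with the two spectral projections
\begin{align*}
P=\mbl{\stab\le\ext{-\lwb}}
\quad\text{on}\quad\hla\ot\hlb,\qquad
Q=\mbl{\stb\ge\ext{-\upb}}
\quad\text{on}\quad\hlb.
\end{align*}
By the definitions of $\lvn$ and $\uvn$ one has $\tr\bbl{P\stab}\ge 1-\epab$ and $\tr\bbl{Q\stb}\ge 1-\epb$, together with the operator inequalities $P\stab P\le\ext{-\lwb}P$ and $Q\stb Q\ge\ext{-\upb}Q$.

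For the extension, writing $\stab=\sum_{i}\lm_{i}\brk{\phi_{i}}$ spectrally, I would take the classical extension
\begin{align*}
\stabc=\sum_{i}\lm_{i}\brk{\phi_{i}}_{\opa\opb}\ot\brk{i}_{\opc}
\end{align*}
along $\stab$'s eigenbasis; this automatically satisfies $\tr_{\opc}\stabc=\stab$, and $\stabc$ has the same non-zero spectrum as $\stab$. Lifting $P$ to $\hla\ot\hlb\ot\hlc$ via $P'=\sum_{i:\lm_{i}\le\ext{-\lwb}}\brk{\phi_{i}}\ot\brk{i}$ gives $\stabc P'\le\ext{-\lwb}P'$, and taking partial trace over $\opa$ yields $\rxrc\le\ext{-\lwb}\stbc$ on the corresponding $\opb\opc$-subspace of $\stbc$-trace at least $1-\epab$. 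Separately, from $Q\stb Q\ge\ext{-\upb}Q$ one obtains $\stb\le\ext{\upb}\stb^{2}$ on the $\opb$-typical subspace, and a gentle-measurement argument allows this inequality to be transferred---at a cost of $\epab^{1/2}$ in trace---to a projection on $\hlb\ot\hlc$, yielding $\stbc\le\ext{\upb}\stbc^{2}$ on a subspace of $\stbc$-trace at least $1-\epab^{1/2}-\epb$. Chaining the two bounds gives $\rxrc\le\ext{-(\lwb-\upb)}\stbc^{2}$ on the intersection, whose $\stbc$-trace is at least $1-(\epab^{1/2}+\epab+\epb)=1-\ep$; the correction $\lgt(1-\epab^{1/2})$ in the claim absorbs the normalization mismatch from the gentle-measurement step.

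The main obstacle is transferring the $\opb$-typical inequality from $\stb$ on $\hlb$ to $\stbc$ on $\hlb\ot\hlc$, because $\stbc$'s spectrum depends on the chosen extension and need not bear any simple relation to that of $\stb$. For example, the pure extension $\stabc=\brk{\psi}$ would give $\stbc$ the spectrum of $\stab$, producing only the strictly weaker bound $-\uvn_{\ep}(\stab)$; so some refinement of the classical extension (or a suitable coherent extension built from a purification of $\stab$ followed by a measurement aligned with $\stb$'s eigenbasis) is needed so that $\stbc$ inherits a spectral lower bound governed by $\upb$ on a large-trace subspace. The $\epab^{1/2}$ slack in $\ep$ and the $\lgt(1-\epab^{1/2})$ correction are both expected to emerge from a Cauchy--Schwarz/gentle-measurement estimate analogous to the one used in (\ref{estdo}).
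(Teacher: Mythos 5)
You have correctly identified the overall strategy (exhibit an explicit extension $\stabc$ and verify the spectral condition in the definition of $\ren_{\ep}$ on a high-probability projection), but the proposal stops exactly at the step that carries the whole proof, and the extension you chose cannot be made to work. With the fine-grained flag $\stabc=\sum_{i}\lm_{i}\brk{\phi_{i}}\ot\brk{i}$ along the eigenbasis of $\stab$, the marginal $\stbc=\sum_{i}\lm_{i}\tr_{\opa}\bbl{\brk{\phi_{i}}}\ot\brk{i}$ has eigenvalues of the form $\lm_{i}$ times eigenvalues of $\tr_{\opa}\bbl{\brk{\phi_{i}}}$; these are governed by the spectrum of $\stab$, not of $\stb$, and are generically far smaller than $2^{-\uvn_{\epb}(\opb)_{\st}}$. (Take $\stab$ maximally mixed on $\hla\ot\hlb$: then every eigenvalue of $\stbc$ is $1/(\dma\dmb)$ while the target threshold is $1/\dmb$.) So the inequality $\stbc\le 2^{\uvn_{\epb}(\opb)_{\st}}\stbc^{2}$ fails on essentially the whole space, and no gentle-measurement argument repairs it, because the problem is not a small trace perturbation but a wholesale change of spectrum caused by the too-fine flag. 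You flag this obstacle yourself, but leave it as ``some refinement is needed,'' which is precisely the missing idea.

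The paper's resolution is to make the flag as coarse as possible: $\hlc$ is two-dimensional and $\stabc=\bstab\ot\brk{1}+(\stab-\bstab)\ot\brk{0}$ with $\bstab=\stab\mbl{\stab\le\upb}$, $\upb=2^{-\lvn_{\epab}(\opa\opb)_{\st}}$. This still yields the upper bound $\rxrc\le\upb\,\stsc$ on the flag-$1$ branch (since $\bstab^{2}\le\upb\bstab$), but now the flag-$1$ marginal on $\opb$ is $\bstb=\tr_{\opa}[\bstab]$, which satisfies $\bstb\le\stb$ and $\tr[\stb-\bstb]\le\epab$, i.e.\ it is an $\epab$-small perturbation of $\stb$ in trace. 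The lower spectral bound is then transferred from $\stb$ to the compression $\cstb=\mbl{\stb\ge\lwb}\bstb\mbl{\stb\ge\lwb}$ (with $\lwb=2^{-\uvn_{\epb}(\opb)_{\st}}$) by an operator Markov/Chebyshev argument: the projection onto eigenvalues of $\cstb$ below $\ccn\lwb$, $\ccn=1-\epab^{1/2}$, carries $\hstb$-weight less than $\epab^{1/2}$. This is where both the $\lgt(1-\epab^{1/2})$ correction and the $\epab^{1/2}$ term in $\ep$ come from --- not from a Cauchy--Schwarz estimate as in the proof of (\ref{estdo}), as you conjectured. To complete your argument you would need to replace your extension by this two-valued one (or prove that your refined extension has a $\stbc$-spectrum controlled by $\stb$ on a large subspace, which the fine-grained classical extension does not).
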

\begin{proof}
Let $\hlc$ be a two-dimensional Hilbert space
with an orthonormal basis $\{|0\ra,|1\ra\}$.
For $\epab>0$,
define a quantum state $\stabc$ 
on $\hla\ot\hlb\ot\hlc$ by
\begin{align*}
\stabc=\bstab\ot\brk{1}+(\stab-\bstab)\ot\brk{0},
\end{align*}
where we have introduced
\begin{align*}
\bstab=\stab\big\{\stab\le\upb\big\}
\end{align*}
with $\upb=2^{-\lvn_{\epab}(\opa\opb)_{\st}}$.
It is clear from this definition that
\begin{align*}
\tr_{\opc}[\stabc]=\stab
\quad\text{and}\quad
\bstab^{2}\le\upb\bstab.
\end{align*}
Also,
it follows from the definition of $\lvn_{\ep}$ that
\begin{align}
\label{trrb}
\tr[\bstab]\ge1-\epab.
\end{align}
Moreover,
for $\epb>0$,
define $\hstb$ and $\cstb$ by
\begin{align*}
\hstb&=\stb\big\{\stb\ge\lwb\big\},\\
\cstb&=\big\{\stb\ge\lwb\big\}\bstb\big\{\stb\ge\lwb\big\},
\end{align*}
with $\lwb=2^{-\uvn_{\epb}(\opb)_{\st}}$ and $\bstb=\tr_{\opa}[\bstab]$.
It then follows from $\bstb\le\stb$ and (\ref{trrb}) that
\begin{align*}
\cstb\le\hstb
\quad\text{and}\quad
\tr\big[\hstb-\cstb\big]\le\epab,
\end{align*}
and so for $\ccn=1-\epab^{1/2}$,
\begin{align*}
\tr\big[\hstb\big\{\cstb<\ccn\lwb\big\}\big]
&\le\tr\big[\hstb\big\{\cstb<\ccn\hstb\big\}\big]\\
&=\tr\big[\hstb\big\{\epab^{1/2}\hstb<\hstb-\cstb\big\}\big]\\
&<\tr\big[\epab^{-1/2}\big(\hstb-\cstb\big)\big]
\le\epab^{1/2}.
\end{align*}
Hence,
\begin{align*}
\tr\big[\hstb\big\{\cstb\ge\ccn\lwb\big\}\big]
>\tr\big[\hstb\big]-\epab^{1/2}.
\end{align*}
Here,
on noting that 
$\cstb$ commutes with
$\big\{\stb\ge\lwb\big\}$,
let us introduce the projection $\pbc$
on $\hlb\ot\hlc$ defined by
\begin{align*}
\pbc=
\big\{\stb\ge\lwb\big\}\big\{\cstb\ge\ccn\lwb\big\}
\ot\brk{1}.
\end{align*}
It can be seen from this definition that
\begin{align*}
\tr\bbl{\stsc\pbc}
&>\tr\big[\hstb\big]-\epab^{1/2}-\tr\big[\stb-\bstb\big]\\
&\ge1-\epb-\epab^{1/2}-\epab.
\end{align*}
Moreover, since
\begin{align*}
\pbc\rxrc\pbc
=\pbc\tr_{\opa}\big[\bstab^{2}\ot\brk{1}\big]\pbc
\le\upb\pbc\stsc\pbc
\end{align*}
and
\begin{align*}
\pbc\stsc^{2}\pbc\ge\pbc\stsc\pbc\stsc\pbc
=\pbc(\cstb^{2}\ot\brk{1})\pbc
\ge\ccn\lwb\pbc\stsc\pbc,
\end{align*}
it follows that
\begin{align*}
\pbc(\rxrc-2^{-\len}\stsc^{2})\pbc\le0
\end{align*}
for
$\len\le\lvn_{\epab}(\opa\opb)_{\st}
-\uvn_{\epb}(\opb)_{\st}+\lgt\ccn$.
Hence, $\pbc\le\{\rxrc-2^{-\len}\stsc\le0\}$
and so
\begin{align*}
\nonumber
\igqr
&\ge\ren_{\ep}(\opa|\opb\opc)_{\st}\\
&\ge\lvn_{\epab}(\opa\opb)_{\st}
-\uvn_{\epb}(\opb)_{\st}
+\lgt\big(1-\epab^{1/2}\big)
\end{align*}
for
\begin{align}
\nonumber
\ep=\epab^{1/2}+\epab+\epb.
\end{align}
This completes the proof.
\end{proof}
\begin{corollary}
\label{aopt}
Let $\hla$ and $\hlb$ be Hilbert spaces
of finite dimensions $\dma$ and $\dmb$, respectively,
and $\stab$ be a quantum state on $\hla\ot\hlb$.
Then,
\begin{align*}
\lim_{\nn\rightarrow\infty}
\frac{1}{\nn}
\igqr\ge\cvn
\end{align*}
for $\ep$ converging to $0$ as $\nn\rightarrow\infty$.
\end{corollary}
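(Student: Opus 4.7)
The plan is to apply Theorem~\ref{gqrlwb} to the tensor product state $\st^{\ot\nn}$ and then use Proposition~\ref{isvnasy} to control the two information spectrum entropies on the right-hand side.

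Fix $\gm>0$. First I would apply Theorem~\ref{gqrlwb} with $\stab$ replaced by $\st^{\ot\nn}$ on $(\hla\ot\hlb)^{\ot\nn}$, using the choices
\begin{align*}
\epab=(1+\nn)^{\dma\dmb}\ext{-\nn\lexp{\stab,\gm}},\qquad
\epb=(1+\nn)^{\dmb}\ext{-\nn\uexp{\stb,\gm}},
\end{align*}
matching the hypotheses of Proposition~\ref{isvnasy} (applied once to the composite system $\opa\opb$ of dimension $\dma\dmb$, and once to $\opb$). This yields
\begin{align*}
\igqrn
\ge\nn\sbl{\vn(\opa\opb)_{\st}-\gm}-\nn\sbl{\vn(\opb)_{\st}+\gm}+\lgt\sbl{1-\epab^{1/2}}
\end{align*}
with $\ep=\epab^{1/2}+\epab+\epb$. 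Dividing by $\nn$ and using $\cvn=\vn(\opa\opb)_{\st}-\vn(\opb)_{\st}$, this becomes
\begin{align*}
\frac{1}{\nn}\igqrn\ge\cvn-2\gm+\frac{\lgt\sbl{1-\epab^{1/2}}}{\nn}.
\end{align*}

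The main obstacle, and really the only non-routine point, is to verify that both exponents $\lexp{\stab,\gm}$ and $\uexp{\stb,\gm}$ are strictly positive so that $\epab$ and $\epb$ (and hence $\ep$) vanish as $\nn\rightarrow\infty$. For $\lexp{\stab,\gm}$, the infimum ranges over $\sg$ commuting with $\stab$ and satisfying $\vn(\sg)+\re{\sg}{\stab}<\vn(\stab)-\gm$. Since the map $\sg\mapsto\vn(\sg)+\re{\sg}{\stab}$ is continuous and equals $\vn(\stab)$ at $\sg=\stab$, any feasible $\sg$ is bounded away from $\stab$. The feasible set is closed in the compact set of commuting density operators, and $\re{\sg}{\stab}$ is continuous and strictly positive off the diagonal $\sg=\stab$, so the infimum is attained and strictly positive. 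The same argument applies to $\uexp{\stb,\gm}$. Consequently $\epab^{1/2}\rightarrow0$ and $\ep\rightarrow0$, while the final correction term $\lgt(1-\epab^{1/2})/\nn$ is $o(1)$.

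Putting everything together gives $\liminf_{\nn\rightarrow\infty}\frac{1}{\nn}\igqrn\ge\cvn-2\gm$ for some $\ep=\ep(\nn)\rightarrow0$. Since $\gm>0$ was arbitrary, taking $\gm\rightarrow+0$ finishes the proof.
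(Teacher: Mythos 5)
Your main estimate is exactly the paper's: apply Theorem~\ref{gqrlwb} to $\st^{\ot\nn}$, control $\lvn_{\epab}$ and $\uvn_{\epb}$ via Proposition~\ref{isvnasy} applied to $\opa\opb$ and to $\opb$, and obtain $\frac{1}{\nn}\igqrn\ge\cvn-2\gm+\lgt(1-\epab^{1/2})/\nn$. That part is fine, and your compactness/lower-semicontinuity argument that $\lexp{\stab,\gm}$ and $\uexp{\stb,\gm}$ are strictly positive for \emph{fixed} $\gm>0$ is also essentially correct.

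The weak point is the last line, ``since $\gm>0$ was arbitrary, taking $\gm\rightarrow+0$ finishes the proof.'' The corollary asserts the existence of a \emph{single} sequence $\ep(\nn)\rightarrow0$ along which $\frac{1}{\nn}\igqrn$ is asymptotically at least $\cvn$, whereas what you have established is: for each fixed $\gm$ there is a sequence $\ep_{\gm}(\nn)\rightarrow0$ achieving $\cvn-2\gm$. Exchanging the limits $\nn\rightarrow\infty$ and $\gm\rightarrow+0$ is not automatic, because your positivity argument is purely qualitative and gives no control on how $\lexp{\stab,\gm}$ and $\uexp{\stb,\gm}$ degrade as $\gm\rightarrow0$; if they decayed too fast, no single choice $\gm=\gm(\nn)\rightarrow0$ would keep $\nn\lexp{\stab,\gm(\nn)}\rightarrow\infty$. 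The paper closes this hole quantitatively: using Fannes' inequality and quantum Pinsker's inequality it shows that any $\sg$ feasible for $\overline{D}$ or $\underline{D}$ satisfies $\gm\le\re{\sg}{\st}^{(1-\snm)/2}$, i.e.\ the exponents are bounded below by (roughly) $\gm^{2/(1-\snm)}$, which permits the explicit coupling $\gm=\nn^{-\gex}$, $\epab,\epb\sim2^{-\nn^{\eex}}$ with $2\gex+\eex<1$. Your argument can instead be repaired by a standard diagonal extraction (using that $\ren_{\ep}$ and hence $\igr$ is monotone in $\ep$, pick $N_{k}$ so that for $\nn\ge N_{k}$ the choice $\gm=1/k$ gives $\ep_{1/k}(\nn)\le1/k$ and error term at most $1/k$, and set $\ep(\nn)=\ep_{1/k}(\nn)$ on $[N_{k},N_{k+1})$), but as written the final step is a gap.
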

\begin{proof}
It follows from Theorem~\ref{gqrlwb} and Proposition~\ref{isvnasy} that
\begin{align*}
\igqrn\ge\nn\big(\cvn-2\gm\big)
+\lgt\big(1-\epab^{1/2}\big)
\end{align*}
with
\begin{align*}
\nonumber
\ep=\epab^{1/2}+\epab+\epb,
\end{align*}
where $\epab$ and $\epb$ are given by~(\ref{defexp}).
Now, suppose that
$\st$, $\sg$ and $\gm$ satisfy the condition 
in the definition of $\overline{D}$ or $\underline{D}$.
Then
\begin{align*}
\gm&<|\vn(\st)-\vn(\sg)|\pm\re{\sg}{\st}\\
&\le\td({\st},{\sg})\dim\hil
-\td({\st},{\sg})\lgt\td({\st},{\sg})
\pm\re{\sg}{\st}\\
&\le{\re{\sg}{\st}}^{\frac{1-\snm}{2}}
\end{align*}
with $\snm>0$,
for sufficiently small $\re{\sg}{\st}$,
where the second inequality follows from 
Fannes inequality~\cite{F73}
and the third one from
quantum Pinsker's inequality and 
$\lim_{x\afa0}x^{\snm}\lgt x=0$ for $\snm>0$.
Therefore,
we can take
\begin{align*}
\gm=\nn^{-\gex},\
\epab=(1+\nn)^{\dma\dmb}2^{-\nn^{\eex}},\
\epb=(1+\nn)^{\dmb}2^{-\nn^{\eex}}
\end{align*}
for $\gex$ and $\eex$ such that
\begin{align*}
\gex,\eex>0
\quad\text{and}\quad
2\gex+\eex<1,
\end{align*}
from which the corollary follows.
\end{proof}

\section{Concluding remarks}
There have been many works on 
the quantities characterizing randomness extraction and 
it has been shown that these quantities have several useful properties 
and they are equivalent up to additive terms of $\ord(\lgt\ep)$
(see e.g. \cite{rr05,th13}). 
Hence, $\igr$ should also be examined in more detail 
to clarify its properties and relations to other quantities. 
Also, 
it may be of interest to consider further extensions
and generalizations of the results~(i)--(iii)
(see the last paragraph of Section~\ref{intro}).
For example, 
$\igr$ is defined for fully quantum states, but 
this work gives its operational meaning 
only for classical-quantum states; 
hence it remains to examine its operational meaning 
for fully quantum states.
Moreover,
since the collision entropy is a special case of 
the {\ralp} entropies, 
it is of interest to consider analogous generalizations
of $\igr$ and investigate their operational meanings. 
Furthermore, it may be natural to consider
the possibility to extend the result~(ii) 
to other conditional entropies.
Regarding the result~(iii),
it remains to derive an upper bound on $\igr$ 
consistent with the lower bound
in Theorem~\ref{gqrlwb}. 
Moreover,
since an asymptotic expansion 
of the maximal length $\lmr^{\ep}$ of randomness
with an optimal second-order term 
(for fixed distance $\ep$ from uniform)
has been derived~\cite{th13},
it may be of interest to examine
how close $\frac{1}{\nn}\igr$ is to this optimum,
in particular when the classical large deviation theory 
giving an optimal second-order asymptotics~\cite{H08}
is applied instead of Sanov's theorem.
Finally, since 
in many applications such as those in cryptography,
$\ep$ should converge to $0$ faster than any polynomial $\nn^{-c}$
for sufficiently large $\nn>\nn_{c}$, 
it is also of interest to examine the possibility
of further improvement in the second-order asymptotics
for $\ep$ converging sufficiently fast.%
\footnote{%
We note that the additive term $\lgt(1-\epab^{1/2})$ 
in the lower bound is not $\ord(\lgt\ep)$ but $\ord(\ep)$.}

\section*{Acknowledgement}
The author is grateful to reviewers for their helpful comments.

\ifCLASSOPTIONcaptionsoff
  \newpage
\fi






\end{document}